\theoremstyle{plain}
\newtheorem{theorem}{Theorem}
\newtheorem{lemma}{Lemma}
\newtheorem{corollary}{Corollary}
\theoremstyle{remark}
\newtheorem{definition}{Definition}
\newcommand\encircle[1]{%
	\tikz[baseline=(X.base)] 
	\node (X) [draw, shape=circle, inner sep=0] {\strut #1};}
\newcommand\ensquare[1]{%
	\tikz[baseline=(X.base)] 
	\node (X) [draw, shape=rectangle, inner sep=.1cm] {\strut #1};}
\newcommand{\da}{\mathtt{DA}} 
\newcommand{\eada}{\mathtt{EADA}} 
\newcommand{\dattc}{\mathtt{DA+TTC}}
\newcommand{\rk}{\textup{rk}}
\theoremstyle{remark}
\newcolumntype{C}{>{$}c<{$}} % math-mode version of "l" column type
\newcolumntype{C}{>{$}c<{$}} % math-mode version of "l" column type
\begin{document}
	\begin{frontmatter}

\title{The Trade-off Between Minimal Instability and\\ Larger Improvements over Deferred Acceptance}

\begin{aug}
\author[id=au1,addressref={add1}]{\fnms{Taylor}~\snm{Knipe} \, \and\, \fnms{Josu\'e}~\snm{Ortega}}

%%%%%%%%%%%%%%%%%%%%%%%%%%%%%%%%%%%%%%%%%%%%%%
%% Addresses                                %%
%%%%%%%%%%%%%%%%%%%%%%%%%%%%%%%%%%%%%%%%%%%%%%
\address[id=add1]{%
\orgname{Queen's University Belfast}}

\end{aug}
\support{Knipe: \href{mailto:tknipe01@qub.ac.uk}{tknipe01@qub.ac.uk}, Ortega: \href{mailto:j.ortega@qub.ac.uk}{j.ortega@qub.ac.uk}.  We acknowledge helpful discussions with Péter Biró and Battal Doğan. T. Knipe acknowledges financial support for doctoral studies from Northern Ireland's Department for the Economy.\\
Submitted on {\ddmmyyyydate\today}.}

\begin{abstract}
The celebrated Efficiency-Adjusted Deferred Acceptance mechanism (EADA) improves the efficiency of the DA algorithm via consented priority violations. 
Notwithstanding its many merits, we show that EADA can improve only two students when an alternative mechanism that Pareto-dominates DA could benefit all but one student.
This shortfall in the number of students improved is not exclusive of EADA but extends to all setwise minimally unstable mechanisms, i.e. those that generate a set of blocking pairs that is never a strict superset of that of another mechanism. 
The incompatibility between number of students improved and minimal instability disappears when blocking pairs are compared cardinally rather than by set inclusion. In some problems, EADA can be doubly dominated: improving fewer students while generating more blocking pairs.
\end{abstract}

\begin{keyword}
\kwd{school choice}
\kwd{minimal instability}
\kwd{consent}
\end{keyword}

\begin{keyword}[class=JEL] %% alphabetical order
\kwd{C78}
\kwd{D47}
\end{keyword}

\end{frontmatter}
\setcounter{footnote}{0}
\newpage

%%%%%%%%%%%%%%%%%%%%%%%%%%%%%%%%%%%%
\section{Introduction}
\label{sec:introduction}
%%%%%%%%%%%%%%%%%%%%%%%%%%%%%%%%%%%%%%%%%
The student-proposing Deferred Acceptance (DA) algorithm has become a cornerstone of market design, combining elegant theoretical properties with widespread adoption.
However, while DA generates stable matchings, it can produce allocations that are Pareto-inefficient for students.
The large size of this inefficiency in practice has motivated the development of mechanisms that Pareto-dominate DA by implementing mutually beneficial trades among students, violating only the priorities of students who willingly consent to such violations.\footnote{\cite{abdulkadirouglu2009strategy} report that 4,296 students in NYC would be assigned to a better school without harming any other pupil.}
The most prominent among these is the Efficiency-Adjusted Deferred Acceptance mechanism (EADA, \citealt{kesten2010school}), which addresses DA’s inefficiency while preserving weak but meaningful stability and incentive properties. A large literature has shown that EADA performs remarkably well both theoretically and in the lab, making it a key mechanism in the market design literature and leading to its consideration for practical implementation (see Section \ref{sec:literature}).

In this paper, we study how many students EADA actually improves compared to other mechanisms that also Pareto-dominate DA, and whether these mechanisms improve more students while also generating less instability (as any mechanism that Pareto-dominates DA is bound to create some blocking pairs).

Despite EADA's excellent performance with regard to other criteria, we find that it can perform poorly when judged solely by the number of students it improves. In school choice problems with $n$ students, we construct examples where $n-1$ students could benefit and yet EADA executes only a single two-way exchange instead (even when all students consent to waive their priorities). Therefore, EADA's improvement ratio, defined as the size of the largest improvement possible divided by the size of the actual improvement implemented in the worst-case instance, is $\frac{n-1}{2}$, which is the largest possible ratio obtained by any Pareto-efficient mechanism that improves upon DA and grows linearly with $n$ (Theorem \ref{prop:ratio}). 

Moreover, the concern that only a few students may improve upon their DA placement is not unique to EADA. In fact, any setwise minimally unstable mechanism that Pareto-dominates DA shares the same limitation (Theorem \ref{thm:impo}). The impossibility of improving more students while minimizing justified envy in a setwise sense is particularly relevant because each blocking pair represents an unfairly treated student who could potentially challenge the allocation legally. Therefore, mechanism designers who insist on setwise minimizing envy while improving upon DA must accept that the distribution of improvements may be rather limited compared to what other mechanisms can achieve. Conversely, policymakers who intend to maximize the coverage of improvement across the student population must recognize that they may create a strictly setwise larger number of students who are treated unfairly and thus can challenge the assignment.

Taken together, Theorems \ref{prop:ratio} and \ref{thm:impo} extend Kesten's observation that EADA may not improve as many students as possible by establishing exact bounds on this shortfall and proving an impossibility theorem that highlights the underlying tradeoff between improving more students and being setwise minimally unstable, a property satisfied by EADA among other mechanisms \citep{dougan2021minimally,tang2021weak}.

However, if we compare the instability generated  simply in terms of the number of blocking pairs created, we obtain room for improvement in both number of students improved and number of blocking pairs created. We show that EADA  can be doubly dominated—simultaneously improving fewer students and generating more blocking pairs than alternative mechanisms that also Pareto-dominate DA (Theorem \ref{prop:prop2}). 
This previously unknown bi-dimensional possibility of improvement opens an avenue of research into the relatively under-explored class of mechanisms that Pareto-dominate the ubiquitously used Deferred Acceptance mechanism.

\paragraph{Outline.}
The remainder of this paper is organised as follows.
Section \ref{sec:literature} discusses the literature.
Section \ref{sec:model} introduces the model.
Section \ref{sec:limitations} presents our results.
Section \ref{sec:conclusion} concludes.

%%%%%%%%%%%%%%%%%%%%%%%%%%%%%%%%%%
\section{Related Literature}
\label{sec:literature}
%%%%%%%%%%%%%%%%%%%%%%%%%%%%%%%%%%%%%%%%%%

\paragraph{DA's Inefficiency.}
The Pareto-inefficiency of the student-proposing Deferred Acceptance algorithm has been recognized since the early studies on matching markets and school choice \citep{roth1982economics, abdulkadirouglu2003}. It has been consistently documented in practice  \citep{abdulkadiroglu2005,abdulkadirouglu2009strategy,che2019efficiency,aeri,ortega2023cost}. 

\paragraph{Mechanisms That Improve Upon DA.}
The most prominent solution to DA's inefficiency is the Efficiency-Adjusted Deferred Acceptance mechanism, proposed in a seminal paper by \cite{kesten2010school}. EADA identifies Pareto-improving trading cycles among students that only violate the priorities of students who benefit from consenting to waive their priorities, ensuring such students have no incentive to withhold consent. EADA has been extensively studied both theoretically and experimentally, with the literature consistently documenting its strong performance across multiple stability and incentives criteria. 

From this literature, the papers closest to ours are \cite{dougan2021minimally,tang2021weak} and \cite{kwon2020justified}. These three papers show that EADA is setwise minimally unstable when using blocking pairs or triplets for the comparison, respectively.\footnote{Outside of the class of mechanisms that Pareto-dominate DA, \cite{aeri} and \cite{ehlers2021robust} show that the Top Trading Cycles algorithm is also setwise minimally unstable.} We contribute to this literature by showing that any setwise minimally unstable mechanism is bound to improve only a few students when many could alternatively benefit. 

Furthermore, \cite{dougan2021minimally} also show that when comparing minimal instability in a cardinal sense, there is no Pareto-efficient mechanism that Pareto-dominates DA that is cardinally minimally unstable. In this direction, we further show that it is possible to generate fewer blocking pairs than EADA while improving more students.

Beyond these papers, the broader literature has consistently praised EADA's performance across other dimensions  \citep{bando2014existence, tang2014new, dur2019school, troyan2020essentially,troyan2020obvious,ehlers2020legal,tang2021weak,reny2022efficient,chen2023regret,cerrone2022school,cheng4961755equilibrium,dogan2023existence,shirakawa2024simple}.	

Another natural approach to Pareto-improve DA is to apply the Top Trading Cycles mechanism to DA's allocation. While this mechanism is known in the literature, it has received surprisingly limited formal analysis (with some exceptions, see \citealt{alcalde2017,kesten2010school,troyan2020essentially}). All of our EADA results extend to DA+TTC: it can miss significantly larger improvements while generating more blocking pairs.

More generally, a series of papers have examined when it is possible to improve upon DA while maintaining strategy-proofness \citep{kesten2019strategy} or consistency \citep{dougan2020consistent} . However, these Pareto improvements have inherent limits: they cannot address DA's poor rank distributions or student segregation \citep{ortega2025pareto}.

\paragraph{Scope of Efficiency Gains.}
\cite{ortega2024unimprovable} show that DA is Pareto-inefficient with high probability in large i.i.d. random markets and that most students can potentially benefit from disjoint trading cycles. Furthermore,  all efficient mechanisms that dominate DA tend to improve nearly every student.
These findings provide context for our analysis. While their paper shows that improvement mechanisms converge in their scope in large i.i.d. random markets, our paper demonstrates that this equivalence does not hold in specific instances, showing that the number of students who benefit can vary dramatically between different efficient mechanisms that dominate DA.

%%%%%%%%%%%%%%%%%%%%%%%%%%%%%%
\section{Model}
\label{sec:model}
%%%%%%%%%%%%%%%%%%%%%%%%%%%%%%
A school choice problem $P$ consists of a set of $n$ students $I$ and a set of schools $S$. Each student $i \in I$ has a strict preference relation $\succ_i$ over the schools. Each school $s \in S$ has a quota of available seats $q_s$ and a strict priority relation $\triangleright_s$ over the students. The school $s_\emptyset$ with $q_{s_\emptyset}=\infty$ denotes being unassigned.
For a given school choice problem $P$, a \emph{matching} $\mu$ is a mapping from $I$ to $S$ such that no school is matched to more students than its quota. We denote by $\mu_i$ the school to which student $i$ is assigned in $\mu$.

The function $\rk_i:S \rightarrow \{1, \ldots, n\}$ specifies the rank of school $s$ according to the preference relation $\succ_i$ of student $i$:
\begin{align}
\rk_i(s) = |\{s' \in S: s' \succ_i s\}|+1,
\end{align}
so that student $i$'s most preferred school gets a rank of $1$. With some abuse of notation, we use the same rank function to specify the students' rank per the priority profile of schools.

A matching $\mu$ \textit{weakly Pareto-dominates} matching $\nu$ if, for every student $i \in I$, $\rk_i(\mu_i) \leq \rk_i (\nu_i)$. A matching $\mu$ \textit{Pareto-dominates} matching $\nu$ if $\mu$ weakly dominates $\nu$ and there exists a student $j \in I$ with $\rk_j(\mu_j) < \rk_j (\nu_j)$. A matching is \textit{Pareto-dominated} if there exists a matching that Pareto-dominates it and is \textit{Pareto-efficient} if it is not Pareto-dominated. 

Student $i$ \emph{desires} school $s$ in matching $\mu$ if $\rk_i(s)< \rk_i(\mu_i)$ and he \emph{envies} student $j$ at matching $\mu$ if $\rk_i(\mu_j)<\rk_i(\mu_i)$. We say that student $j$ \emph{violates} student $i$'s priority at school $s$ in matching $\mu$ if $i$ desires $s$, $\mu_j =s$, and $\rk_s(i) < \rk_s(j)$. 
In this case, we say that student $i$'s envy towards $j$ is justified, and we call ($i,j,s$) and ($i,s$) a blocking triplet and a blocking pair in matching $\mu$, respectively. We use $B(\mu)$ to denote the set of blocking pairs in matching $\mu$.
A matching $\mu$ is \emph{non-wasteful} if every school $s$ that is desired by some student in $\mu$ satisfies $|\{ i \in I:\mu_i =s\}| =q_s$. A matching $\mu$ is \textit{stable} if it is non-wasteful and no student’s priority at any school is violated in $\mu$.\\

\emph{Mechanisms with and without Consent.} A \emph{mechanism} is a function that maps every school choice problem to a matching. 
We focus on two well-known mechanisms. The first of these is the student-proposing Deferred Acceptance mechanism (DA, \citealt{gale1962}), which works as follows. Students apply to their most preferred school. Schools tentatively accept applicants up to their capacity, rejecting the rest. Rejected students apply to their next choice in subsequent rounds. The algorithm terminates when no new rejections occur, yielding the student-optimal stable matching $\da(P)$.

The second mechanism of interest is Top Trading Cycles on top of DA (DA+TTC). This mechanism first computes DA, then applies Gale's Top Trading Cycles algorithm where each student initially ``owns'' their DA assignment \citep{shapley1974}. Students point to owners of their most preferred schools, and all students in cycles trade. The process repeats until no cycles remain.

Standard mechanisms simply return a matching for any school choice problem. A \emph{consent-based mechanism} $M$ instead maps every school choice problem \emph{and} a subset of students $W \subseteq I$ to a matching. 
The interpretation of the subset of students $W$ is that these students agree to have their priorities violated at some schools as long as consenting to waive their priorities does not harm them, and thus we refer to $W$ as the \emph{consent structure}. Formally, the property that waiving their priorities never harms consenting students means that for any problem $P$ and any consent structure $W \subseteq I \setminus \{i\}$:
$$\rk_i[M(P,W \cup \{i\})]\leq \rk_i[M(P,W )]$$

Note that for any fixed consent structure $W$, a consent-based mechanism $M(\cdot,W)$ reduces to a standard mechanism that maps problems to matchings. 
Efficiency-Adjusted Deferred Acceptance (EADA, \citealt{kesten2010school}) is an example of a consent-based mechanism. To describe EADA, we first define an \emph{interrupter}, which is a student-school pair $(i,s)$ such that:
\begin{itemize}
	\item student $i$ is tentatively accepted at $s$ in step $t$,
	\item student $j\neq i$ is rejected at $s$ in step $t'\geq t$,
	\item student $i$ is rejected from school $s$ in step $t''\geq t'$
\end{itemize}

We say that an interrupter $(i,s)$ consents if $i \in W$.
Having defined the notion of an interrupter, we now describe the celebrated (and slightly more complex) EADA consent-based mechanism in detail below.

\begin{algorithm}[H]
	\caption{Efficiency-Adjusted Deferred Acceptance (EADA)}
	\begin{algorithmic}[1]
		\State \textbf{Input:} Problem $P$ and consent structure $W \subseteq I$
		\State Compute initial matching $\mu^0 = \da(P)$
		\While{consenting interrupters exist}
		\State Find the last round $t$ where	a consenting interrupter $(i,s)$ was rejected
		\State Identify all consenting interrupters $(i,s)$ in round $t$
		\If{no consenting interrupters found}
		\State \textbf{break}
		\EndIf
		\For{each consenting interrupter $(i,s)$ in round $t$}
		\State Remove school $s$ from student $i$'s preferences
		\EndFor
		\State Rerun DA with modified preferences
		\EndWhile
		\State \Return final matching
	\end{algorithmic}
\end{algorithm}

We denote by $\eada(P,W)$ the matching generated by Efficiency-Adjusted Deferred Acceptance algorithm in school choice problem $P$ with consent structure $W$. We refer to the mechanism $\eada(\cdot,I)$ as EADA with full consent.

EADA with full consent can be more straightforwardly computed by identifying every school who did not reject any student in DA's execution (these schools are called \emph{under-demanded}). After finding DA's allocation, those schools are permanently matched with their assigned students, and such under-demanded schools and corresponding students are removed from DA. Then DA is executed in the simplified problem, and this procedure is repeated until all schools have been removed. This procedure was proposed by \cite{tang2014new} and can be extended to partial consent structures, but this will not be necessary for our purposes.

\paragraph{Setwise and Cardinally Minimally Unstable Mechanisms.} Since it is well-known that every mechanism that Pareto-dominate DA generates blocking pairs in general, we need a way to compare their relative performance in terms of instability (e.g. blocking pairs). We use the notion of setwise minimally unstable mechanisms, a popular way of comparing the blocking pairs generated by assignment mechanisms \citep{tang2021weak, dougan2021minimally,aeri,kwon2020justified,bonkoungou2025reforms}.\\
\begin{definition}
A Pareto-efficient mechanism $M$ is \emph{setwise minimally unstable} if there does not exist any problem $P$ and any other Pareto-efficient mechanism $N$ such that 
$B(N(P)) \subsetneq B(M(P))$.
\end{definition}

EADA with full consent satisfies this property \citep{dougan2021minimally,tang2021weak}, as do many other mechanisms, since the setwise criterion creates only a partial order across mechanisms. Therefore, mechanisms that fail to be setwise minimally unstable are quite disappointing from a legal and fairness perspective, as they generate an unnecessarily large number of students who have their priorities violated and who may legally challenge their allocation. However, as we will demonstrate, this desirable stability property comes at a significant cost in terms of the number of students who can be improved.

A different way to compare the instability among mechanisms that dominate DA is to count the number of blocking pairs they generate, rather than comparing them by set inclusion. This leads to a notion of cardinally more stable mechanisms. 
\begin{definition}
	A Pareto-efficient mechanism $M$ is \emph{cardinally more stable} than a Pareto-efficient mechanism $N$ if, for every problem $P$,	$|B(M(P))| \leq |B(N(P))|$, with strict inequality for some $P$.
\end{definition}

We also define two other useful properties of mechanisms used throughout the paper. 
First, we say that a mechanism $M$ \emph{Pareto-dominates DA} if: (i) for any problem $P$, the matching $M(P)$ weakly Pareto-dominates the matching $\da(P)$, and (ii) there exists some problem $P$ where the matching $M(P)$ strictly Pareto-dominates the matching $\da(P)$.
Second, we say that a mechanism $M$ is Pareto-efficient if, for any problem $P$, the matching $M(P)$ is Pareto-efficient.

\paragraph{Envy Digraphs and Improvement Cycles.}
To analyse improvements over DA, we adopt a graph-theoretic framework. Formally,
the \emph{envy digraph} $G^{\da(P)}$ is a directed graph where nodes represent students  and a directed edge $(i,j)$ exists if student $i$ envies student $j$'s assignment under DA.We denote the edge $(i,j)$ as $i \rightarrow j$. 

A \emph{cycle} $C$ in the envy digraph is a sequence of nodes $C=(i_0 \to i_1 \to \ldots \to i_k \to i_0)$ such that there is a directed edge between each pair of consecutive nodes and no edge is repeated. A \emph{trading cycle} is a cycle in which every node appears exactly once, except for $i_0$. Unless specifically stated otherwise, every cycle mentioned henceforth is a trading cycle. 

We focus on school choice problems where $G^{\da(P)}$ contains at least one cycle.
A \emph{feedback set} $F$ is a collection of pairwise-disjoint cycles in $G^{\da(P)}$ such that, when all nodes in these cycles are removed (together with their adjacent edges), the resulting subgraph is acyclic.\footnote{Feedback sets, sometimes called feedback vertex sets, have been extensively studied in graph theory \citep{karp2010reducibility}. }
Feedback sets represent sets of disjoint trading cycles that can be simultaneously implemented to improve upon DA. 

We use $V(F)$ to denote the set of nodes (students) that are included in some cycle in $F$. If $i \in V(F)$, we say that $F$ covers $i$. %A feedback set $F^\star$ is \emph{maximum} if, for any other feedback set $F$, we have $|V(F)| \leq |V(F^\star)|$. 
Given a feedback set $F$, we define the matching $\mu^F$ as the allocation obtained when the trades in each cycle of $F$ are implemented, starting from DA.

Our analysis builds upon a well-known result which connects improvements over DA to cycles in envy digraphs:

\begin{lemma}[\cite{tang2014new}]\label{lemma:imporvementsaretrades} Every Pareto improvement over DA corresponds to a set of pairwise-disjoint trading cycles in $G^{\da(P)}$.\end{lemma}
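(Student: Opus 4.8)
The plan is to prove Lemma~\ref{lemma:imporvementsaretrades} by translating any Pareto improvement $\mu$ over $\da(P)$ into a permutation on the set of affected students and then decomposing that permutation into cycles, which I will show must be trading cycles in the envy digraph $G^{\da(P)}$. First I would fix a matching $\mu$ that weakly Pareto-dominates $\da(P)$, and let $I' \subseteq I$ be the set of students who strictly improve, i.e.\ $\rk_i(\mu_i) < \rk_i(\da(P)_i)$. The key structural observation is that $\mu$ and $\da(P)$ assign exactly the same multiset of schools (with multiplicities given by the quotas), because DA is non-wasteful: if some student $i$ strictly prefers $\mu_i$ to $\da(P)_i$, then $i$ desires $\mu_i$ under $\da(P)$, so $\da(P)$ fills $\mu_i$ to capacity, and more generally no seat used by $\mu$ can go unused by $\da(P)$ among the desired schools. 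Hence on $I'$ the map $i \mapsto \mu_i$ is a redistribution of the seats that $\da(P)$ gave to students in $I'$ (one has to check that a student in $I'$ never receives under $\mu$ a seat held under $\da$ by a student outside $I'$ — this follows because such a student outside $I'$ keeps rank, so weak domination forces them to keep their exact school, leaving those seats untouched).

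Next I would build the bijection explicitly. Since $\mu$ restricted to $I'$ and $\da(P)$ restricted to $I'$ use the same collection of seats, I can define, for each $i \in I'$, a student $\sigma(i) \in I'$ such that $\da(P)_{\sigma(i)} = \mu_i$; when a school has several seats one picks any consistent matching of seats, and $\sigma$ is a well-defined permutation of $I'$. Because every $i \in I'$ strictly improves, $\mu_i \neq \da(P)_i$, so $\sigma$ has no fixed points. Decompose $\sigma$ into its disjoint cyclic orbits $C_1, \dots, C_m$; each orbit has length at least $2$. For a cycle $C = (i_0 \to i_1 \to \dots \to i_{k} \to i_0)$ induced by $\sigma$ (reading $i_{r+1} = \sigma^{-1}(i_r)$ or $\sigma$, whichever orientation makes the edges point the right way), student $i_r$ is assigned under $\mu$ the DA-seat of $i_{r+1}$, and since $i_r$ strictly prefers $\mu_{i_r} = \da(P)_{i_{r+1}}$ to $\da(P)_{i_r}$, student $i_r$ envies $i_{r+1}$ under $\da(P)$; hence $i_r \to i_{r+1}$ is a genuine edge of $G^{\da(P)}$. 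So each orbit of $\sigma$ is a trading cycle in the envy digraph, these cycles are pairwise disjoint by construction, and implementing the trades along all of them reproduces $\mu$ on $I'$ while leaving everyone else fixed — which is exactly $\mu = \mu^F$ for the feedback-relevant collection $F = \{C_1,\dots,C_m\}$.

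The main obstacle I anticipate is the bookkeeping around schools with quota greater than one and around the school $s_\emptyset$ of infinite capacity: I need to argue carefully that the seat-level bijection can be chosen so that every edge it produces is a true envy edge, rather than merely a permutation of seats that happens to fix the coarser school assignment. The cleanest way around this is to work at the level of (student, seat) pairs: model $\da(P)$ and $\mu$ as perfect matchings between $I$ and a fixed set of seats $\bigcup_{s} \{(s,1),\dots,(s,q_s)\}$ (with $s_\emptyset$ contributing as many seats as students), observe that $\mu \circ \da(P)^{-1}$ is then a permutation of $I$, restrict attention to its support, and note that a student can only lie in the support if they strictly improve — because a student who keeps the same school but changes seat index can be re-indexed away without loss, and a student outside $I'$ must, by weak Pareto domination together with non-wastefulness of DA, keep precisely their DA school. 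Once the problem is phrased this way, the cycle decomposition of a finite permutation is standard and each cycle is immediately seen to be a trading cycle in $G^{\da(P)}$, completing the proof.
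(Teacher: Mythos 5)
The paper does not actually prove this lemma; it imports it verbatim from \cite{tang2014new}, so there is no in-paper argument to compare against. Your blind proof is the standard argument for this result and it is essentially correct: reduce to the set $I'$ of strictly improving students, note that everyone outside $I'$ keeps exactly their DA school (weak domination plus strict preferences), show that $\mu$ and $\da(P)$ assign the same multiset of seats to $I'$, extract a fixed-point-free permutation $\sigma$ of $I'$ with $\da(P)_{\sigma(i)}=\mu_i$, and observe that each orbit is a directed cycle of genuine envy edges, hence a trading cycle in $G^{\da(P)}$. The one step you should tighten rather than gesture at is the multiset-preservation claim: for every school $s$ receiving some $i\in I'$ under $\mu$, student $i$ desires $s$ at $\da(P)$, so non-wastefulness forces $s$ to be full under $\da(P)$; since non-improvers occupy the same seats in both matchings, the number of $I'$-students at $s$ under $\mu$ is at most that under $\da(P)$, and summing over all schools (both sides total $|I'|$) forces equality school by school. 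This counting also disposes of the $s_\emptyset$ bookkeeping you worry about: no improving student can move \emph{to} $s_\emptyset$ (DA is individually rational, so no one desires $s_\emptyset$ at $\da(P)$), and the equality of sums then rules out anyone moving \emph{from} $s_\emptyset$ either, so the permutation lives entirely on real seats. Two cosmetic points: the orientation of $\sigma$ is not optional --- with $\da(P)_{\sigma(i)}=\mu_i$ the envy edge is $i\to\sigma(i)$, full stop --- and the resulting collection of disjoint cycles need not be a feedback set in the paper's sense (that requires the residual graph to be acyclic, which holds only when $\mu$ is Pareto-efficient among dominating matchings), so you should not denote the reconstructed matching by $\mu^F$ unless you first verify that property; the lemma as stated only asks for pairwise-disjoint trading cycles, which your construction delivers.
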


Having established our framework, we are now ready to state our results.

%%%%%%%%%%%%%%%%%%%%%%%%%%%%%%%%%%%%%%%%%%%%%%%%%%%%%
\section{Limitations of Existing Mechanisms}
\label{sec:limitations}
%%%%%%%%%%%%%%%%%%%%%%%%%%%%%%%%%%%%%%%%%%%%%%%%%%%%
How many students do existing mechanisms actually improve compared to the maximum improvement possible over DA? To quantify this gap, we define the improvement ratio of a mechanism. Let $\mathcal{I}(\mu,P)$ denote the set of students improved by the matching $\mu$ that Pareto-dominates DA in problem $P$:
\begin{equation}
\mathcal{I}(\mu,P) \coloneqq \{i \in I : \rk_i(\mu_i)<\rk_i(\da_i(P)) \}
\end{equation} 

Let $\mathcal{M}^\da(P)$ denote the set of matchings that weakly Pareto-dominate DA in $P$.

We define the maximum improvement $\mathcal{I^*}(P)$ as:
\begin{equation}
\mathcal{I^*}(P) \coloneqq \max_{\mu \in \mathcal{M}^\da(P)} |\mathcal{I}(\mu,P)|
\end{equation}

Let $\mathcal{P}_{n}$ denote the class of all school choice problems with $n$ students. For any mechanism $M$ that Pareto-dominates DA, we define its \emph{improvement ratio} as:
\begin{equation}
\rho(M;n) \coloneqq \max_{P \in \mathcal{P}_{n}} \frac{|\mathcal{I^*}(P)|}{|\mathcal{I}(M(P),P)|}
\end{equation}

For example, an improvement ratio of 2 means that the mechanism that improves as many students as possible compared to DA may generate twice as many improvements as mechanism $M$. Our first result shows that this ratio can grow linearly with the size of the problem, and particularly that the improvement ratio of DA+TTC and EADA with full consent achieve the worst possible improvement ratio among all mechanisms that weakly Pareto-dominate DA.

\begin{theorem}
\label{prop:ratio}
For any $n > 4$ and any Pareto-efficient mechanism $M$ that Pareto-dominates DA:
\begin{equation*}
\rho(\eada;n) = \rho(\dattc;n) = \frac{n-1}{2} \geq \rho(M;n)
\end{equation*}
\end{theorem}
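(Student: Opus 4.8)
The plan is to establish the theorem in three parts: (i) an upper bound $\rho(M;n) \le \frac{n-1}{2}$ for every Pareto-efficient mechanism $M$ that Pareto-dominates DA; (ii) a matching lower bound $\rho(\eada;n) \ge \frac{n-1}{2}$ exhibited by an explicit family of problems; and (iii) the same lower bound for $\dattc$, either via the same family or a variant. The upper bound is the structural heart of the argument. Since $M$ Pareto-dominates DA and is Pareto-efficient, whenever $M(P) \ne \da(P)$ we have $|\mathcal{I}(M(P),P)| \ge 1$; the only way the ratio can be large is if $|\mathcal{I}(M(P),P)|$ is small while $\mathcal{I}^*(P)$ is close to $n$. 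So fix a problem $P$ realising the max in $\rho(M;n)$ and suppose $|\mathcal{I}(M(P),P)| = k$. By Lemma~\ref{lemma:imporvementsaretrades}, $M(P)$ corresponds to a set of pairwise-disjoint trading cycles in $G^{\da(P)}$ covering exactly the $k$ improved students; since every node in a trading cycle has out-degree and in-degree at least one within the cycle, the smallest possible such configuration is a single $2$-cycle, giving $k \ge 2$ whenever $M(P) \ne \da(P)$. Meanwhile $\mathcal{I}^*(P) \le n-1$ trivially, because the student ranked last at the bottom-ranked school (more carefully: any Pareto improvement over DA leaves at least one student unchanged, as a feedback set of disjoint cycles over all $n$ nodes with every node improved is impossible when $n$ is such that... ) — this is the point that needs care, see below. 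Combining, $\frac{\mathcal{I}^*(P)}{|\mathcal{I}(M(P),P)|} \le \frac{n-1}{2}$.

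For the lower bound, I would construct, for each $n$, a problem $P_n$ with a single short trading cycle that EADA (with full consent) implements — ideally forced to be a $2$-cycle among two students — while simultaneously there is a feedback set, i.e. a different collection of disjoint trading cycles, covering $n-1$ students (for instance one long $(n-1)$-cycle, or a union of short cycles), which some other Pareto-efficient mechanism dominating DA could implement. The key design constraints are: the DA outcome must leave an $(n-1)$-cycle of mutual envy available in $G^{\da(P_n)}$; the interrupter structure of DA's execution must be arranged so that the \emph{only} consenting interrupter removals EADA performs generate precisely the small $2$-cycle exchange and nothing more (this is where one exploits that EADA's cycles are driven by the specific rejection chains in DA, not by the global envy graph); and all priorities should be such that full consent does not unlock the large cycle. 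I expect the construction to use a "decoy" structure where most students form a long envy cycle that DA never resolves because no interrupter along it ever gets rejected again, while two other students are entangled in an interruption that EADA does undo. Then $\mathcal{I}^*(P_n) \ge n-1$ and $|\mathcal{I}(\eada(P_n,I),P_n)| = 2$, yielding the ratio $\frac{n-1}{2}$.

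The main obstacle is the sharp constant in the upper bound — specifically, justifying $\mathcal{I}^*(P) \le n-1$ rather than the trivial $\le n$, and simultaneously that $|\mathcal{I}(M(P),P)| \ge 2$ rather than $\ge 1$, in a way that is uniform over all Pareto-efficient $M$ dominating DA. For the first: a matching that improves \emph{all} $n$ students over DA would, by Lemma~\ref{lemma:imporvementsaretrades}, correspond to disjoint trading cycles covering every node; I would argue this forces a cyclic structure on the "top choices" of students along each cycle that is incompatible with $\da(P)$ being the student-optimal \emph{stable} matching (each student in such a cycle strictly prefers someone else's DA school, and chasing these preferences around a cycle produces a rejection chain in DA contradicting its termination at the student-optimal point, or contradicts non-wastefulness). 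For the second: since $M(P)$ strictly dominates DA at the worst-case $P$, at least one trading cycle is implemented, and any trading cycle has length $\ge 2$, so $k \ge 2$; I must confirm the worst-case $P$ for the ratio indeed has $M(P) \ne \da(P)$, which holds because otherwise the ratio is undefined or we pick a different $P$. Assembling these, together with verifying the explicit family $P_n$ both for EADA and adapting it (or re-running the TTC-on-DA dynamics) for $\dattc$, completes the argument; the $\dattc$ case should be parallel since TTC on DA's allocation is likewise insensitive to envy edges that do not form pointing cycles given the ownership structure.
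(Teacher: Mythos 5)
Your overall architecture (upper bound of $\tfrac{n-1}{2}$ plus an explicit family achieving it for EADA and DA+TTC) is exactly the paper's, and your upper-bound argument is essentially the paper's: the denominator is at least $2$ because any Pareto improvement decomposes into trading cycles of length at least two (Lemma~\ref{lemma:imporvementsaretrades}), and Pareto-efficiency of $M$ forces $M(P)\neq\da(P)$ whenever DA is inefficient at $P$; the numerator is at most $n-1$ because at least one student is unimprovable. For that last fact the paper simply cites \cite{tang2014new}; your sketched re-derivation (``chasing preferences around a cycle produces a rejection chain contradicting termination'') is not carried out and, as stated, is the murkier route --- the clean argument is that some school is under-demanded at $\da(P)$ (it never rejects anyone during DA's run), so no student envies its occupant, who therefore has in-degree zero in $G^{\da(P)}$ and cannot lie on any trading cycle. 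You flag this yourself as ``the point that needs care,'' and it does remain a gap in your write-up.

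The substantive gap, however, is the tightness half. You describe the \emph{desiderata} of the family $P_n$ --- a decoy long cycle in $G^{\da(P_n)}$ that EADA's interrupter-removal never touches, plus a forced $2$-cycle --- but you never exhibit preferences and priorities, never verify DA's outcome, and never run EADA or TTC on the construction. This is where all the work in the paper's proof lives: it gives explicit profiles (each $s_k$ ranks $i_k\triangleright i_{k-1}$, most students rank $s_1\succ s_{k+1}\succ s_k$, with special preferences for $i_1, i_{n-2}, i_{n-1}, i_n$), shows $\da$ assigns $i_k\mapsto s_k$, shows the Tang--Yu under-demanded-school computation collapses everything except $\{i_1,i_2\}$ so EADA with full consent implements only the $2$-cycle $(i_1\to i_2\to i_1)$ (and likewise TTC on the DA endowments finds only that cycle), and then exhibits two disjoint cycles $(i_1\to i_{n-1}\to i_1)$ and $(i_2\to i_3\to\cdots\to i_{n-2}\to i_2)$ improving $n-1$ students. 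One subtlety your plan would miss: a single $(n-1)$-cycle is not how the paper achieves the large improvement; it uses a union of two disjoint cycles, and arranging the priorities so that the long cycle survives in the envy digraph while EADA still stops at the $2$-cycle is precisely the delicate part. Without an explicit, verified instance, the equalities $\rho(\eada;n)=\rho(\dattc;n)=\tfrac{n-1}{2}$ are not established, so the proposal as written proves only the inequality $\rho(M;n)\le\tfrac{n-1}{2}$.
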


\begin{proof}
{\it Upper Bound.} First we prove that the improvement ratio cannot exceed $\frac{n-1}{2}$ for any mechanism that Pareto-dominates DA.
For the numerator: At most $n-1$ students can improve over DA, as there is always at least one unimprovable student \citep{tang2014new}.
For the denominator: The minimum number of students improved by any Pareto-efficient mechanism that  Pareto-dominates DA is 2, because a trading cycle requires at least two students.\footnote{We have assumed that DA is inefficient; otherwise, the minimum would be 0 and the ratio undefined.}
Therefore, the ratio cannot exceed $\frac{n-1}{2}$.

\noindent {\it Tightness.} To show this bound is achievable, we construct a school choice problem with $n$ students and schools (all with unit capacity) where
EADA with full consent and DA+TTC achieve this improvement ratio while a different matching improves $n-1$ students.

Let school priorities be such that, for every school $s_k$ except $s_1$:
\begin{equation}
s_k: i_k \triangleright_{s_k} i_{k-1}
\end{equation}

Students' preferences are such that, for every student $i_k \in I \setminus \{i_1, i_{n-2}, i_{n-1}, i_n$\}:
\begin{equation}
i_k: s_1 \succ_{i_k} s_{k+1} \succ_{i_k} s_k
\end{equation}

The remaining priorities and preferences are as follows:
\begin{align}
s_1&: i_1 \triangleright_{s_1} i_n \triangleright_{s_1} i_2 \triangleright_{s_1} i_{n-1}\\
i_1 &: s_2 \succ_{i_1} s_{n-1} \succ_{i_1} s_1\\
i_{n-2} &: s_1 \succ_{i_{n-2}} s_2 \succ_{i_{n-2}} s_{n-2}\\
i_{n-1} &: s_1 \succ_{i_{n-1}} s_2 \succ_{i_{n-1}} s_{n-1}\\
i_n &: s_1 \succ_{i_n} s_n
\end{align}

The table below demonstrates this construction for $n=7$. The DA allocation appears in bold. The allocation that maximises the number of improvements appears in circles, whereas the matching recommended by both EADA with full consent and DA+TTC appears in squares (whenever different from DA).
\begin{center}
\begin{tabular}{ccccccc|ccccccc}
\multicolumn{14}{c}{Example 1} \\
\hline
$i_1$ & $i_2$ & $i_3$ & $i_4$ & $i_5$ & $i_6$ & $i_7$ & $s_1$ & $s_2$ & $s_3$ & $s_4$ & $s_5$ & $s_6$ & $s_7$ \\
\hline
\ensquare{$s_2$} & \ensquare{$s_1$} & $s_1$ & $s_1$ & $s_1$ & \encircle{$s_1$} & $s_1$ & $\bm{i_1}$ & $\bm{i_2}$ & $\bm{i_3}$ & $\bm{i_4}$ & $\bm{i_5}$ & $\bm{i_6}$ & $\bm{i_7}$ \\
\encircle{$s_6$} & \encircle{$s_3$} & \encircle{$s_4$} & \encircle{$s_5$} & \encircle{$s_2$} & $s_2$ & $\bm{s_7}$ & $i_7$ & \ensquare{$i_1$} & \encircle{$i_2$} & \encircle{$i_3$} & \encircle{$i_4$} & $i_5$ & $i_6$ \\
$\bm{s_1}$ & $\bm{s_2}$ & $\bm{s_3}$ & $\bm{s_4}$ & $\bm{s_5}$ & $\bm{s_6}$ & $\cdot$ & \ensquare{$i_2$} & $\cdot$ & $\cdot$ & $\cdot$ & $\cdot$ & $\cdot$ & $\cdot$ \\
$\cdot$ & $\cdot$ & $\cdot$ & $\cdot$ & $\cdot$ & $\cdot$ & $\cdot$  & $i_6$  & $\cdot$ &$\cdot$ & $\cdot$ & $\cdot$ & $\cdot$ & $\cdot$ 
\end{tabular}
\end{center}

We now demonstrate that the unique stable matching assigns each student $i_k$ to the school with the corresponding index $s_k$.

\paragraph{Execution of DA.} In DA's round 1, student $i_1$ applies to $s_2$, whereas all other students apply to $s_1$. School $s_1$ temporarily accepts $i_n$ (its highest-priority applicant) and rejects all others.
In round 2, rejected students $i_2$ through $i_{n-1}$ make their next applications. Students $i_2$ through $i_{n-3}$ apply to their second-choice schools $s_3$ through $s_{n-2}$, while $i_{n-2}$ and $i_{n-1}$ both apply to $s_2$. School $s_2$ temporarily accepts $i_1$ (who applied in round 1) and rejects $i_{n-2}$ and $i_{n-1}$.
In subsequent rounds, a cascade of rejections occurs as students apply to their next preferred schools, eventually resulting in each student $i_k$ being matched with school $s_k$.

The uniqueness of the set of stable matchings can be verified by observing that the school-proposing DA algorithm yields the same outcome, with each school immediately being matched to its highest-priority student.

\paragraph{Analysis of EADA with full consent.} We begin by identifying under-demanded schools. First, $s_n$ is under-demanded and is removed together with $i_n$. In the next iteration, $s_3, s_4, \ldots, s_{n-1}$ are under-demanded and removed jointly with $i_3, i_4, \ldots, i_{n-1}$, respectively. At this stage, only schools $s_1$ and $s_2$ remain with students $i_1$ and $i_2$. These students form a trading cycle where $i_1$ improves to $s_2$ and $i_2$ improves to $s_1$. Thus, EADA with full consent improves exactly 2 students (assuming $i_n$ consents to waive his priority at $s_1$, otherwise no student improves).

\paragraph{Analysis of DA+TTC.} In DA+TTC, we begin with the DA allocation where each student $i_k$ is assigned to school $s_k$. In the TTC algorithm, only students $i_1$ and $i_2$ form a trading cycle, as $i_1$ points to $i_2$ and $i_2$ points to $i_1$. After executing this trade, students $i_3$ through $i_n$ remain at their DA allocations which are preferred to any remaining allocation. Therefore, DA+TTC also improves exactly 2 students.

\paragraph{Maximum Improvement.} There exists, however, a matching $\mu^*$ that dominates DA, obtained by implementing the cycles $(i_1 \rightarrow i_{n-1} \rightarrow i_1)$ and $(i_2 \rightarrow i_3 \rightarrow \ldots \rightarrow i_{n-2} \rightarrow i_2)$, which improves $n-1$ students. Under this matching, student $i_n$ receives the same allocation as under DA, while all other students receive a more preferred school.

Therefore, the improvement ratio equals $\frac{n-1}{2}$ for any $n > 4$, establishing the tightness of our bound.
\end{proof}

Some remarks are now in order. First, \cite{kesten2010school} already pointed out in his seminal paper that larger improvements than EADA were possible. Our contribution here is to quantify how much larger those improvements can be even when all students consent to waive their priorities, leading to an improvement gap that grows linearly with the problem size.

Second, in our example, the maximum improvement comes at the cost of generating more blocking pairs. In the particular example with seven agents, the larger improvement generates at least three blocking pairs ($\{i_1,s_2\},\{i_2,s_1\},\{i_7,s_1\}$) versus a single blocking pair generated by EADA with full consent and DA+TTC ($\{i_7, s_1\}$). 
We do not take a stance on whether the larger improvement is worth the additional justified envy. Our point is to make explicit that such reduction in justified envy comes at a high cost in terms of how many students are benefited. 
Furthermore, as we will show in the next subsection, larger improvements can sometimes lead to fewer blocking pairs.

Third, it is natural to ask whether EADA with full consent and DA+TTC are the only mechanism that fails to implement larger improvements that benefit more students. We answer this question in the negative in the next Subsection.

\subsection{A General Impossibility Result}
We now generalize Theorem \ref{prop:ratio} to a wide class of mechanisms beyond EADA with full consent and DA+TTC, namely all those that are setwise minimally unstable, i.e. those which never produce a set of blocking pairs that is strictly larger by set inclusion than that generated by another Pareto-efficient mechanism that Pareto-dominates DA.
\begin{theorem}
	\label{thm:impo}
	Among Pareto-efficient mechanisms that Pareto-dominate DA, every setwise minimally unstable mechanism achieves an improvement ratio of $\frac{n-1}{2}$.
\end{theorem}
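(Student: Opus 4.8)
### Proof Strategy

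The plan is to exhibit, for each $n>4$, a single school choice problem $P$ in which (i) there exists a matching Pareto-dominating DA that improves $n-1$ students, and (ii) \emph{every} Pareto-efficient mechanism $N$ that Pareto-dominates DA and is setwise minimally unstable must improve exactly $2$ students on $P$. Combined with the upper bound $\rho(M;n)\leq\frac{n-1}{2}$ already established in the proof of Theorem \ref{prop:ratio} (which applies to all mechanisms Pareto-dominating DA, hence a fortiori to setwise minimally unstable ones), this yields $\rho(M;n)=\frac{n-1}{2}$ for every such mechanism.

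The natural candidate is the very instance constructed in Theorem \ref{prop:ratio}. There the envy digraph $G^{\da(P)}$ decomposes into essentially two trading cycles: the ``long'' cycle $C_{\text{long}}=(i_2\to i_3\to\cdots\to i_{n-2}\to i_2)$ and short cycles involving $i_1$, namely $(i_1\to i_{n-1}\to i_1)$ and $(i_1\to i_2\to i_1)$, with $i_n$ always unimprovable. The first step is to enumerate, using Lemma \ref{lemma:imporvementsaretrades}, all feedback sets (collections of pairwise-disjoint trading cycles) in $G^{\da(P)}$ and the corresponding matchings $\mu^F$ that Pareto-dominate DA; the key structural claim to verify is that implementing the long cycle $C_{\text{long}}$ (whether alone or together with $(i_1\to i_{n-1}\to i_1)$) generates a set of blocking pairs that \emph{strictly contains} the single blocking pair $\{(i_n,s_1)\}$ produced by the ``short trade'' $(i_1\to i_2\to i_1)$ that EADA and DA+TTC select. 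Concretely, one checks that moving $i_2$ out of $s_2$ along the long cycle creates a fresh blocking pair at $s_2$ (e.g. $i_{n-2}$ or $i_{n-1}$, who ranks $s_2$ second and has priority at $s_2$ over the new occupant), so $B(\mu^{\text{short}})=\{(i_n,s_1)\}\subsetneq B(\mu^{\text{big}})$.

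The second step is the reduction: since $N$ is Pareto-efficient, $N(P)$ must be a Pareto-efficient matching that weakly dominates DA, hence (by Lemma \ref{lemma:imporvementsaretrades}) equals $\mu^F$ for some \emph{maximal} feedback set $F$. If $F$ includes the long cycle, then $B(N(P))\supsetneq\{(i_n,s_1)\}=B(\eada(P,I))$, and since $\eada(\cdot,I)$ is itself a Pareto-efficient mechanism that Pareto-dominates DA, this contradicts the setwise minimal instability of $N$. Therefore $F$ cannot contain the long cycle; the only remaining maximal feedback set is the short trade $(i_1\to i_2\to i_1)$ (one must also confirm $i_{n-1}$ cannot be improved without $i_1$, so no alternative two-cycle covering more students exists), forcing $|\mathcal{I}(N(P),P)|=2$.

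The main obstacle is the bookkeeping in Step 1: one must carefully verify that the instance is robust in the sense that \emph{any} feedback set improving more than two students necessarily routes a trade through $s_2$ in a way that strips $s_2$ from $i_2$, thereby creating a blocking pair not present in the short-trade outcome — i.e. the single blocking pair $(i_n,s_1)$ is genuinely a subset of the blocking-pair set of every ``large'' improvement. This requires checking the priorities $\triangleright_{s_2}$ and the second-choices of $i_{n-2},i_{n-1}$ against every possible occupant of $s_2$ under a large improvement, and confirming $(i_n,s_1)$ reappears in those outcomes too (so the inclusion is strict, not incomparable). Once this containment is pinned down, the contradiction with setwise minimal instability is immediate and the theorem follows.
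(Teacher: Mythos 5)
Your proposal is correct and follows essentially the same route as the paper: it reuses the instance from Theorem \ref{prop:ratio}, enumerates the trading cycles in $G^{\da(P)}$, and shows that every Pareto-efficient matching dominating DA other than the one implementing the short trade $(i_1\to i_2\to i_1)$ yields a set of blocking pairs strictly containing $\{(i_n,s_1)\}$ (the paper handles the one incomparable case, the long cycle alone, exactly as you do --- by noting it is not Pareto-efficient until $(i_1\to i_{n-1}\to i_1)$ is also implemented, after which the strict inclusion holds). The only difference is presentational: the paper tabulates all cycles and their blocking pairs explicitly for $n=7$ and then describes the general-$n$ cycle structure, whereas you leave that enumeration as the acknowledged bookkeeping step.
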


\begin{proof}
	We use the same problem as in the proof of Theorem \ref{prop:ratio}. In the case with seven students and schools, there are six cycles in DA's envy digraph, summarized in Table \ref{tab:cycles} below.
	\begin{table}[h!]
		\centering
		\caption{Cycles and Blocking Pairs in Example 1}
		\label{tab:cycles}
		\begin{tabular}{ll}
			\toprule
			%            \multirow{2}{\textbf{Cycle}}& \textbf{Blocking Pairs}
			\bf Cycles & \bf Blocking Pairs\\
			\toprule
			$C_1=(i_1 \to i_2 \to i_1)$ & $(i_7, s_1)$\\
			$C_2=(i_1 \to i_2 \to i_3 \to  i_1)$ & $(i_7, s_1),(i_6, s_1), (i_2,s_1)$\\
			$C_3=(i_1 \to i_2 \to i_3 \to  i_4 \to i_1)$ & $(i_7, s_1),(i_6, s_1), (i_2,s_1)$\\
			$C_4=(i_1 \to i_2 \to i_3 \to  i_4 \to i_5 \to i_1)$ & $(i_7, s_1),(i_6, s_1), (i_2,s_1)$\\
			$C_5=(i_2 \to i_3 \to  i_4 \to i_5 \to i_2)$ & $(i_1, s_2)$\\
			$C_6=(i_1 \to i_6 \to i_1)$ & $(i_7, s_1), (i_2,s_1), (i_5,s_6)$\\
			$C_7=(i_1 \to i_6 \to i_2 \to i_1)$ & $(i_7, s_1), (i_1,s_2), (i_5,s_6)$\\
			\bottomrule
		\end{tabular}
	\end{table}
	
	Any Pareto-efficient mechanism that Pareto-dominates DA must implement at least one of these cycles (potentially two if mutually disjoint). In particular, the larger improvement that includes as many students as possible  implements $C_5=(i_2  \to i_3 \to i_4 \to i_5 \to i_2)$ and $C_6=(i_1 \to i_6 \to i_1)$. 
	
	The cycles $C_2, C_3, C_4, C_6, C_7$ all generate a set of blocking pairs that is strictly larger (by set inclusion) than that generated by $C_1$. The cycle $C_5$ is incomparable to $C_1$, but implementing it by itself does not lead to a Pareto-efficient allocation, since the cycle $C_6$ could be jointly implemented. But if both $C_5$ and $C_6$ are implemented, the set of blocking pairs becomes $\{(i_1,s_2), (i_2,s_1), (i_7,s_1)\}$, a strict superset of the set of blocking pairs generated by $C_1$. Thus, we conclude that every setwise minimally unstable must implement $C_1$, therefore missing the larger improvements dictated by $C_5$ and $C_6$ jointly, leading to an improvement ratio of $\frac{n-1}{2}=\frac{6}{2}$.
	
	Finally, the extension to arbitrary $n$ is straightforward: there are $n$ cycles, with $C_k=(i_1 \to \ldots \to i_{k+1} \to i_1)$ for all integer $1 \leq k \leq n-3$, all of them generating the blocking pair ($i_n,s_1$), and only $C_1$ introducing exactly this blocking pair. 	
	There are exactly three more cycles: $C_{n-2}=(i_2 \to i_3 \to \ldots \to i_{n-3} \to i_2)$, $C_{n-1}=(i_1 \to i_6 \to i_1)$, and $C_n=(i_1 \to i_{n-1} \to i_2 \to i_1)$. 
	Only one collection of cycles that leaves the graph acyclic is setwise minimally unstable: $C_1$. 
	However, the cycles $C_{n-2}$ and $C_{n-1}$ are disjoint (and thus compatible) and lead to an allocation where $n-1$ students improve, and therefore any setwise minimally unstable mechanism incurs in the largest improvement ratio possible, namely $\frac{n-1}{2}$.
\end{proof}

Noting that EADA is setwise minimally unstable (as proven by \cite{tang2021weak} and \cite{dougan2021minimally}) makes it evident that Theorem \ref{thm:impo} implies Theorem \ref{prop:ratio}. Nonetheless, we present both Theorems independently to simplify the argument and improve their readability. 

\subsection{Larger Improvements with Less Justified Envy}
Theorems \ref{prop:ratio} and \ref{thm:impo} shows that EADA with full consent and DA+TTC can improve significantly fewer students compared to the DA baseline, but that such smaller coverage is unavoidable if we insist on mechanisms that are setwise minimally unstable. Now we show that the impossibility disappears if we compare blocking pairs simply by counting them, i.e. we consider cardinal minimal instability.

To be precise, let $M$ and $N$ be Pareto-efficient mechanisms that Pareto-dominate DA. We say that $M$ improves more students than $N$ if for every problem $P$, we have $|\mathcal I(N(P),P)|\leq |\mathcal I(M(P),P)|$, with strict inequality for some $P$. We are now ready to state our result.

\begin{theorem}\label{prop:prop2} 
	There exists a Pareto-efficient mechanism that Pareto-dominates DA that improves more students than EADA with full consent while being cardinally more stable than EADA with full consent. An analogous result holds for DA+TTC.
\end{theorem}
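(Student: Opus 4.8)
The plan is to construct a single explicit school choice problem $P$ on which EADA with full consent (respectively DA+TTC) both improves fewer students and generates strictly more blocking pairs than some alternative Pareto-efficient mechanism that Pareto-dominates DA, and then to package that instance into a mechanism. The instance should be engineered so that EADA's greedy removal of under-demanded schools forces it into a ``bad'' trading cycle — one that resolves one short exchange but leaves several long-range envy edges intact — while a different feedback set in the envy digraph $G^{\da(P)}$ implements a longer cycle (covering more students) whose induced blocking pairs are strictly fewer in number, even though, by Theorem~\ref{thm:impo}, that feedback set cannot be setwise minimally unstable (its blocking-pair set will be incomparable to, not a subset of, EADA's). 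Concretely I would start from a variant of Example~1 and modify priorities at $s_1$ so that the ``long'' cycle $C$ that improves many students destroys, rather than creates, the stray blocking pairs $(i_2,s_1)$, $(i_k,s_1)$ that EADA's short cycle leaves behind; the goal is $|B(\mu^C)| < |B(\eada(P,I))|$ together with $\mathcal I(\eada(P,I),P) \subsetneq \mathcal I(\mu^C,P)$.

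Once such a $P^*$ is in hand, I would define the competing mechanism $N$ by: $N(P) = \mu^{C}$ on problem $P^*$ (and on any relabeling of it), and $N(P) = \eada(P,I)$ on every other problem. One must check that $N$ is well-defined (the instances on which we deviate form an isomorphism-closed class disjoint from the rest), that $N$ is Pareto-efficient everywhere (on $P^*$ because $\mu^C$ is the outcome of implementing a feedback set that leaves $G^{\da(P^*)}$ acyclic, which by Lemma~\ref{lemma:imporvementsaretrades} and the usual argument yields a Pareto-efficient matching; elsewhere because EADA is), and that $N$ Pareto-dominates DA (immediate, since $\mu^C$ weakly dominates $\da(P^*)$ by construction and EADA does so on all other problems). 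Then ``improves more students than'' and ``cardinally more stable than'' both reduce to the two inequalities verified on $P^*$: $|\mathcal I(N(P),P)| \le |\mathcal I(\eada(P,I),P)|$ for all $P$ with strict inequality at $P^*$, and $|B(N(P))| \le |B(\eada(P,I))|$ for all $P$ with strict inequality at $P^*$. The DA+TTC version is handled by the same instance if it is also arranged that TTC on $\da(P^*)$ selects the same short cycle $i_1 \leftrightarrow i_2$ (which it will, given the top choices), or otherwise by a parallel instance.

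The main obstacle is the construction of $P^*$ itself: I need an envy digraph in which (i) the under-demanded-school dynamics of EADA provably isolate a short two-student cycle, (ii) there is a disjoint or single long cycle covering strictly more students, and (iii) crucially, the blocking pairs created by the long cycle are strictly fewer in number than those left by EADA — this last point is delicate because every Pareto-improvement over DA creates at least one blocking pair and longer cycles tend to create more, so the priorities at the ``hub'' school $s_1$ must be tuned so that the students displaced along the long cycle are precisely the ones whose envy at $s_1$ was already present and gets extinguished, while EADA's short cycle leaves that envy in place. Verifying (i) requires care because EADA's iterative removal must be traced to completion; I expect the cleanest route is to use Tang and Yu's under-demanded-school characterization and keep the ``spine'' of the example (schools $s_3,\dots,s_{n-2}$ and their occupants) under-demanded at the first stage so EADA peels them off before ever seeing the long cycle. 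Once (i)–(iii) hold on the seven-student instance, the extension to general $n$ and the embedding into a mechanism are routine.
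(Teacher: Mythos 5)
Your proof architecture is exactly the paper's: exhibit one explicit problem $P^*$ on which some feedback set improves strictly more students \emph{and} produces strictly fewer blocking pairs than EADA with full consent, then define a mechanism that plays that feedback set on $P^*$ and coincides with EADA everywhere else (the paper does not even bother with isomorphism-closure; deviating on the single literal instance suffices for the existential claim). But the entire mathematical content of the theorem is the instance itself, and you have not produced it --- you explicitly defer ``the main obstacle,'' list the properties $P^*$ must satisfy, and stop. That is a plan, not a proof. Worse, the concrete starting point you propose (a variant of Example~1 with retuned priorities at $s_1$) is unlikely to succeed: in Example~1 EADA's two-cycle generates exactly one blocking pair, $(i_7,s_1)$, and since every Pareto improvement over DA creates at least one blocking pair, nothing can be strictly cardinally more stable there. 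Any workable instance must first force EADA into a cycle long enough to generate at least two blocking pairs; the paper's Example~2 does this with a seven-student problem in which EADA implements the four-cycle $(i_1\to i_6\to i_4\to i_5\to i_1)$, improving $4$ students with $3$ blocking pairs, while the disjoint pair of cycles $(i_1\to i_2\to i_1)$ and $(i_3\to i_6\to i_4\to i_5\to i_3)$ improves $6$ students with only $2$ blocking pairs.

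A second, smaller gap: your hope that one instance handles both EADA and DA+TTC cannot be realized by the paper's construction, and there is reason to doubt it in general. In the paper's Example~2 it is DA+TTC itself that achieves the doubly dominating improvement over EADA, so that instance says nothing against DA+TTC; a separate five-student instance (Example~3) is needed, in which DA+TTC's two-cycle $i_1\leftrightarrow i_2$ generates four blocking pairs while two disjoint two-cycles improve four students with a single blocking pair --- and there EADA is the better mechanism. This incomparability (the paper's Corollary~1) is a substantive feature you would have discovered only by carrying out the construction. The packaging of the instances into mechanisms, and the verification of Pareto-efficiency and dominance over DA, are handled correctly in your write-up, but without explicit, fully verified instances the proof is incomplete.
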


\begin{proof}
%%%
{\it Comparison with EADA with full consent.} Consider the following school choice problem with seven students and seven schools with one seat each. DA's allocation appears in bold. Preferences and priorities left unspecified are irrelevant.

\begin{center}
\begin{tabular}{ccccccc|ccccccc}
\multicolumn{14}{c}{Example 2} \\
\hline
$i_1$ & $i_2$ & $i_3$ & $i_4$ & $i_5$ & $i_6$ & $i_7$ 						& $s_1$ & $s_2$ & $s_3$ & $s_4$ & $s_5$ & $s_6$ & $s_7$ \\
\hline
$s_6$ & $s_1$ & $s_6$ & $s_5$ & $s_3$ & $s_4$ & $s_4$ 						& $\bm{i_1}$ & $\bm{i_2}$ & $\bm{i_3}$ & $\bm{i_4}$ & $\bm{i_5}$ & $\bm{i_6}$&$\cdot$ \\
$s_4$ & $\bm{s_2}$ & $\bm{s_3}$ & $\bm{s_4}$ & $s_6$ & $\pmb{s_6}$ & $\pmb{s_7}$& $i_5$ & $i_1$ & $i_5$ & $i_7$ & $i_4$ & $i_3$&$\cdot$ \\
$s_2$ & $\cdot$ & $\cdot$ & $\cdot$ & $s_4$ & $\cdot$ & $\cdot$ 			& $i_2$ & $\cdot$ & $i_1$ & $i_1$ & $i_1$ &  $i_5$& $\cdot$\\
$s_3$ & $\cdot$ &$\cdot$ & $\cdot$ & $s_1$ & $\cdot$ & $\cdot$ 				& $\cdot$ & $\cdot$ & $\cdot$ & $i_6$ & $\cdot$ &  $i_1$&$\cdot$ \\
$s_5$ & $\cdot$ & $\cdot$ & $\cdot$ & $\bm{s_5}$ & $\cdot$ & $\cdot$ 			& $\cdot$ & $\cdot$ & $\cdot$ & $i_5$ & $\cdot$ & $\cdot$ & $\cdot$ \\
$\bm{s_1}$ & $\cdot$ & $\cdot$ & $\cdot$ & $\cdot$ & $\cdot$ & $\cdot$ 			& $\cdot$ & $\cdot$ & $\cdot$ & $\cdot$ & $\cdot$ & $\cdot$ & $\cdot$ 
\end{tabular}
\end{center}

The DA envy digraph appears below, showing the trading cycles implemented by EADA with full consent (Figure 1), improving 4 students through the cycle $(i_1 \to i_6 \to i_4 \to i_5 \to i_1)$ and an alternative, larger improvement (Figure 2), benefiting 6 students via two disjoint cycles, namely $(i_1 \to i_2 \to i_1)$ and $(i_3 \to i_6 \to i_4 \to i_5 \to i_3)$. 
Interestingly, EADA with full consent generates 3 blocking pairs ($\{i_3,s_6\},\{i_5,s_6\},\{i_7,s_4\}$), whereas the alternative, larger improvement generates only two blocking pairs ($\{i_1,s_4\},\{i_7,s_4\}$).\footnote{Note that here we are simply counting the number of blocking pairs; there is a literature (see Section \ref{sec:literature}) arguing that that some are more relevant than others.} Note that the larger improvement benefits every student who could benefit under some allocation that Pareto-dominates DA, because $i_7$ cannot trade with anyone since no student envies him.

\begin{figure}[h]
\centering
		\begin{minipage}{0.45\textwidth}
	\centering
	\begin{tikzpicture}[scale=0.7, ->, node distance=2.5cm, every node/.style={draw, circle, minimum size=1cm}, thick]
		\def\radius{3}
		
		% Define nodes in a circular layout
		\foreach \i/\name in {1/i_1, 2/i_2, 3/i_3, 4/i_4, 5/i_5, 6/i_6, 7/i_7} {
			\node[circle, draw] (\name) at ({360/7 * (\i-1)}:\radius) {$\name$}; 
		}
		
		% Color nodes 4,5,1,6 with blue borders
		\foreach \i in {1,4,5,6} {
			\node[circle, draw=blue, minimum size=1cm] (i_\i) at ({360/7 * (\i-1)}:\radius) {$i_\i$}; 
		}
		
		% Draw edges based on the given connections
		% 1->6,4,2,3,5
		\draw[ultra thick, blue] (i_1) to (i_6);
		\draw[thick] (i_1) to (i_4);
		\draw[thick] (i_1) to (i_2);
		\draw[thick] (i_1) to (i_3);
		\draw[thick] (i_1) to (i_5);
		
		% 2->1
		\draw[thick] (i_2) to (i_1);
		
		% 3->6
		\draw[thick] (i_3) to (i_6);
		
		% 4->5
		\draw[ultra thick, blue, bend right] (i_4) to (i_5);
		
		% 5->1,4,6,3
		\draw[ultra thick, blue, bend right=10] (i_5) to (i_1);
		\draw[thick] (i_5) to (i_4);
		\draw[thick] (i_5) to (i_6);
		\draw[thick] (i_5) to (i_3);
		
		% 6->4
		\draw[ultra thick, blue] (i_6) to (i_4);
		
		% 7->4
		\draw[thick] (i_7) to (i_4);
	\end{tikzpicture}
	\caption{EADA, full consent.}
\end{minipage}
\hfill
\begin{minipage}{0.45\textwidth}
	\centering
	\begin{tikzpicture}[scale=0.7, ->, node distance=2.5cm, every node/.style={draw, circle, minimum size=1cm}, thick]
		\def\radius{3}
		
		% Define nodes in a circular layout
		\foreach \i/\name in {1/i_1, 2/i_2, 3/i_3, 4/i_4, 5/i_5, 6/i_6, 7/i_7} {
			\node[circle, draw] (\name) at ({360/7 * (\i-1)}:\radius) {$\name$}; 
		}
		
		% Color nodes 1-6 with red borders
		\foreach \i in {1,2,3,4,5,6} {
			\node[circle, draw=red, minimum size=1cm] (i_\i) at ({360/7 * (\i-1)}:\radius) {$i_\i$}; 
		}
		
		% Draw edges based on the given connections
		% 1->6,4,2,3,5
		\draw[thick] (i_1) to (i_6);
		\draw[thick] (i_1) to (i_4);
		\draw[ultra thick, red] (i_1) to (i_2);
		\draw[thick] (i_1) to (i_3);
		\draw[thick] (i_1) to (i_5);
		
		% 2->1
		\draw[ultra thick, red] (i_2) to (i_1);
		
		% 3->6
		\draw[ultra thick, red] (i_3) to (i_6);
		
		% 4->5
		\draw[ultra thick, red, bend right] (i_4) to (i_5);
		
		% 5->1,4,6,3
		\draw[thick, bend right=10] (i_5) to (i_1);
		\draw[thick] (i_5) to (i_4);
		\draw[thick] (i_5) to (i_6);
		\draw[ultra thick, red] (i_5) to (i_3);
		
		% 6->4
		\draw[ultra thick, red] (i_6) to (i_4);
		
		% 7->4
		\draw[thick] (i_7) to (i_4);
	\end{tikzpicture}
	\caption{Larger improvement.}
\end{minipage}
\end{figure}
 To conclude the proof, consider the mechanism $M$ that chooses the same matching as EADA with full consent in every problem except the aforementioned one. Such matching clearly is Pareto-efficient, Pareto-dominates DA, improves more students than EADA with full consent and is cardinally more stable than EADA.

{\it Comparison with DA+TTC.} 	To show that a larger improvement can generate fewer blocking pairs than DA+TTC, consider the following example with five students and five schools with unit capacity. DA's allocation appears in bold.
\begin{center}
\begin{tabular}{ccccc|ccccc}
	\multicolumn{10}{c}{Example 3} \\
	\hline
$i_1$ & $i_2$ & $i_3$ & $i_4$ & $i_5$ & $s_1$ & $s_2$ & $s_3$ & $s_4$ & $s_5$ \\
\hline
$s_2$ & $s_1$ & $s_1$ & $s_1$ & $s_1$ & $\bm{i_1}$ & $\bm{i_2}$ & $\bm{i_3}$ & $\bm{i_4}$ & $\bm{i_5}$ \\
$s_4$ & $s_3$ & $s_2$ & $s_2$ & $\bm{s_5}$ & $i_5$ & $i_4$ & $i_4$ & $i_5$ & $\cdot$ \\
$\bm{s_1}$ & $\bm{s_2}$ & $\bm{s_3}$ & $\bm{s_4}$ & $\cdot$ & $i_4$ & $i_3$ & $i_2$ & $i_1$ & $\cdot$ \\
$\cdot$ & $\cdot$ & $\cdot$ & $\cdot$ & $\cdot$ & $i_2$ & $i_1$ & $\cdot$ & $\cdot$ & $\cdot$ 
\end{tabular}
\end{center}

In DA+TTC, students $i_1$ and $i_2$ trade their DA allocations, generating four blocking pairs: $(i_4,s_1), (i_5, s_1), (i_3, s_2), (i_4,s_2)$. In contrast, the largest improvement is the cycle $(i_1 \to i_4 \to i_1), (i_2 \to i_3 \to i_2)$, generating only one blocking pair: $(i_5,s_1)$. 

To conclude the proof, consider the mechanism $M$ that chooses the same matching as DA+TTC in every problem except the aforementioned one. Such matching clearly is Pareto-efficient, Pareto-dominates DA, improves more students than DA+TTC and is cardinally more stable than DA+TTC.
\end{proof}

We now discuss some implications of Theorem \ref{prop:prop2}. First, our two examples reveal an interesting asymmetry: in Example 2, DA+TTC achieves the larger improvement that doubly dominates EADA, whereas in Example 3, EADA with full consent produces the improvement that double dominates DA+TTC. This observation yields the following incomparability result:

\begin{corollary}
\label{cor:incomparability}
EADA with full consent and DA+TTC are incomparable in terms of size of improvement and number of blocking pairs, i.e. there exist problems where DA+TTC doubly dominates EADA and vice versa.
\end{corollary}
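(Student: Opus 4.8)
The plan is to obtain both directions of the incomparability directly from the two problems already constructed in the proof of Theorem~\ref{prop:prop2}, after supplying the one mechanism execution that that proof left implicit in each case. For Example~2, the proof of Theorem~\ref{prop:prop2} established that EADA with full consent runs the single cycle $(i_1 \to i_6 \to i_4 \to i_5 \to i_1)$, improving $4$ students and producing the blocking pairs $\{i_3,s_6\},\{i_5,s_6\},\{i_7,s_4\}$, and that the ``larger improvement'' of Figure~2 improves $6$ students with only the two blocking pairs $\{i_1,s_4\},\{i_7,s_4\}$. For Example~3, that proof established that DA+TTC runs the single cycle $(i_1 \to i_2 \to i_1)$, improving $2$ students and producing the four blocking pairs $\{i_4,s_1\},\{i_5,s_1\},\{i_3,s_2\},\{i_4,s_2\}$, while the largest improvement $\mu^*$ improves $4$ students with the single blocking pair $\{i_5,s_1\}$. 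What remains is to show that in Example~2 the ``larger improvement'' is exactly the output of DA+TTC, and that in Example~3 the matching $\mu^*$ is exactly the output of EADA with full consent.

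For Example~2 I would run Gale's Top Trading Cycles starting from the DA allocation, with each student owning their DA school $s_k$. Reading the pointing graph off the preference columns, the first round produces the cycle $i_3 \to i_6 \to i_4 \to i_5 \to i_3$ (each of these four students points to the owner of their first choice); once these students trade and are removed, $i_1$ and $i_2$ form the only remaining cycle and trade, while $i_7$ points to itself. The resulting matching coincides with implementing the disjoint cycles $(i_1 \to i_2 \to i_1)$ and $(i_3 \to i_6 \to i_4 \to i_5 \to i_3)$, i.e.\ it is exactly the matching of Figure~2. Hence on Example~2 DA+TTC improves strictly more students than EADA with full consent ($6$ versus $4$) and is strictly more stable ($2$ versus $3$ blocking pairs), so DA+TTC doubly dominates EADA on this problem.

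For Example~3 I would compute EADA with full consent via the iterated under-demanded-schools procedure of \cite{tang2014new}: in the DA run only $s_5$ rejects no one, so $i_5 \to s_5$ is permanently fixed and removed; re-running DA on the remaining four students and four schools leaves $s_3$ and $s_4$ under-demanded, fixing $i_2 \to s_3$ and $i_1 \to s_4$; a final DA on $\{i_3,i_4\}$ and $\{s_1,s_2\}$ gives $i_4 \to s_1$ and $i_3 \to s_2$. The resulting matching is precisely $\mu^*$ — it implements the disjoint cycles $(i_1 \to i_4 \to i_1)$ and $(i_2 \to i_3 \to i_2)$. Hence on Example~3 EADA with full consent improves strictly more students than DA+TTC ($4$ versus $2$) and is strictly more stable ($1$ versus $4$ blocking pairs), so EADA doubly dominates DA+TTC on this problem. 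Since each mechanism doubly dominates the other on some problem, neither is comparable to the other in the joint (size-of-improvement, number-of-blocking-pairs) order, which is the claim.

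The only laborious part is verifying these two mechanism executions against the partially specified preference and priority tables; everything else is bookkeeping, since the relevant blocking-pair counts are already recorded in the proof of Theorem~\ref{prop:prop2}. I do not expect any conceptual obstacle.
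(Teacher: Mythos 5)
Your proposal is correct and follows exactly the paper's own route: the corollary is derived from Examples 2 and 3 of Theorem \ref{prop:prop2}, with DA+TTC producing the Figure 2 matching (6 improved, 2 blocking pairs, versus EADA's 4 and 3) and EADA with full consent producing $\mu^*$ in Example 3 (4 improved, 1 blocking pair, versus DA+TTC's 2 and 4). The only difference is that you explicitly carry out the TTC and under-demanded-school executions that the paper leaves implicit, and both of those verifications check out against the stated preference and priority tables.
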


We find Corollary \ref{cor:incomparability} useful because DA+TTC has been considered suboptimal regarding several axioms \citep{kesten2010school,troyan2020essentially}, yet it is not consistently doubly dominated by the more well-regarded EADA mechanism.

Second, the possibility of simultaneously achieving broader student coverage and fewer blocking pairs suggests that alternative efficient mechanisms that Pareto-dominate DA merit investigation. In particular, understanding the properties of such mechanisms—their computational complexity, laboratory performance, and incentive properties—represents an exciting area for future research. While EADA has deservedly received wide praise in the literature, our results point to unexplored opportunities within the class of mechanisms that Pareto-dominate DA that could consistently outperform EADA in this particular criterion of more beneficiaries with fewer blocking pairs.

Finally, we discuss the relationship between the impossibility result (Theorem \ref{thm:impo}) and the possibility result (Theorem \ref{prop:prop2}). Both setwise and cardinal minimal instability capture legitimate concerns about mechanism design, though they reflect different policy goals. Setwise minimal instability focuses on the diversity of potential challenges to the allocation, recognizing that each blocking pair represents a distinct grievance that could motivate legal or political opposition. Cardinal minimal instability, by contrast, adopts a more utilitarian perspective that treats each blocking pair as imposing similar costs regardless of the specific students involved. While our impossibility result applies specifically to setwise minimally unstable mechanisms, this does not invalidate the setwise criterion but rather illuminates the fundamental trade-offs it entails. How much this limitation matters in practice depends critically on policymaker objectives: those prioritizing broad political consensus and legal robustness may find the setwise criterion compelling despite its constraints on student coverage, while those focused on maximizing total improvements or minimizing aggregate unfairness may prefer cardinal comparisons. 

%%%%%%%%%%%%%%

%%%%%%%%%%%%%%%%%%%%%%%%%%%%%%%%%%%%%
\section{Conclusion}
\label{sec:conclusion}
%%%%%%%%%%%%%%%%%%%%%%%%%%%%%%%%%%%%

Our analysis establishes that setwise minimally unstable mechanisms, including EADA, face fundamental constraints on student coverage. We prove these mechanisms achieve the worst possible improvement ratio, potentially helping only $2$ students when $n-1$ could benefit. However, this limitation disappears under cardinal comparisons of blocking pairs: alternative mechanisms can simultaneously improve more students than EADA while generating fewer blocking pairs.

Our findings reveal that the choice between setwise and cardinal minimal instability represents an important design decision. While setwise minimally unstable mechanisms offer protection against diverse legal challenges, it does so at the cost of significantly reduced student participation in efficiency gains. The relatively underexplored class of mechanisms that Pareto-dominate DA contains promising alternatives for policymakers willing to prioritize broader student coverage over setwise stability constraints.
\newpage
\setlength{\parskip}{-0.1em} 
\singlespacing      % Single spacing
\bibliographystyle{te} 
\bibliography{bibliogr}

\begin{thebibliography}{33}
\newcommand{\enquote}[1]{``#1''}
\providecommand{\natexlab}[1]{#1}
\providecommand{\url}[1]{\texttt{#1}}
\providecommand{\urlprefix}{URL }
\providecommand{\bibAnnoteFile}[1]{%
  \IfFileExists{#1}{\begin{quotation}\noindent\textsc{Key:} #1\\
  \textsc{Annotation:}\ \input{#1}\end{quotation}}{}}
\providecommand{\bibAnnote}[2]{%
  \begin{quotation}\noindent\textsc{Key:} #1\\
  \textsc{Annotation:}\ #2\end{quotation}}

\bibitem[{Abdulkadiroglu et~al.(2005)Abdulkadiroglu, Pathak, and
  Roth}]{abdulkadiroglu2005}
Abdulkadiroglu, Atila, Parag Pathak, and Alvin Roth (2005), \enquote{The new
  york city high school match.} \emph{American Economic Review}, 95(2),
  364--367.
\bibAnnoteFile{abdulkadiroglu2005}

\bibitem[{Abdulkadiro{\u{g}}lu et~al.(2009)Abdulkadiro{\u{g}}lu, Pathak, and
  Roth}]{abdulkadirouglu2009strategy}
Abdulkadiro{\u{g}}lu, Atila, Parag~A Pathak, and Alvin~E Roth (2009),
  \enquote{Strategy-proofness versus efficiency in matching with indifferences:
  Redesigning the nyc high school match.} \emph{American Economic Review},
  99(5), 1954--78.
\bibAnnoteFile{abdulkadirouglu2009strategy}

\bibitem[{Abdulkadiro{\u{g}}lu and S{\"o}nmez(2003)}]{abdulkadirouglu2003}
Abdulkadiro{\u{g}}lu, Atila and Tayfun S{\"o}nmez (2003), \enquote{School
  choice: A mechanism design approach.} \emph{American Economic Review}, 93(3),
  729--747.
\bibAnnoteFile{abdulkadirouglu2003}

\bibitem[{Abdulkadiroğlu et~al.(2020)Abdulkadiroğlu, Che, Pathak, Roth, and
  Tercieux}]{aeri}
Abdulkadiroğlu, Atila, Yeon-Koo Che, Parag~A. Pathak, Alvin~E. Roth, and
  Olivier Tercieux (2020), \enquote{Efficiency, justified envy, and incentives
  in priority-based matching.} \emph{American Economic Review: Insights}, 2(4),
  425--42.
\bibAnnoteFile{aeri}

\bibitem[{Alcalde and Romero{-}Medina(2017)}]{alcalde2017}
Alcalde, Jos\'e and Antonio Romero{-}Medina (2017), \enquote{Fair student
  placement.} \emph{Theory and Decision}, 83(2), 293--307.
\bibAnnoteFile{alcalde2017}

\bibitem[{Bando(2014)}]{bando2014existence}
Bando, Keisuke (2014), \enquote{On the existence of a strictly strong nash
  equilibrium under the student-optimal deferred acceptance algorithm.}
  \emph{Games and Economic Behavior}, 87, 269--287.
\bibAnnoteFile{bando2014existence}

\bibitem[{Bonkoungou and Nesterov(2025)}]{bonkoungou2025reforms}
Bonkoungou, Somouaoga and Alexander Nesterov (2025), \enquote{When do reforms
  meet fairness concerns in school admissions?} \emph{Social Choice and
  Welfare}, 1--25.
\bibAnnoteFile{bonkoungou2025reforms}

\bibitem[{Cerrone et~al.(2024)Cerrone, Hermstrüwer, and
  Kesten}]{cerrone2022school}
Cerrone, Claudia, Yoan Hermstrüwer, and Onur Kesten (2024), \enquote{{School
  choice with consent: An experiment}.} \emph{The Economic Journal}.
\bibAnnoteFile{cerrone2022school}

\bibitem[{Che and Tercieux(2019)}]{che2019efficiency}
Che, Yeon-Koo and Olivier Tercieux (2019), \enquote{Efficiency and stability in
  large matching markets.} \emph{Journal of Political Economy}, 127(5),
  2301--2342.
\bibAnnoteFile{che2019efficiency}

\bibitem[{Chen and M{\"o}ller(2023)}]{chen2023regret}
Chen, Yiqiu and Markus M{\"o}ller (2023), \enquote{Regret-free truth-telling in
  school choice with consent.} \emph{Theoretical Economics}.
\bibAnnoteFile{chen2023regret}

\bibitem[{Cheng et~al.(2024)Cheng, Wang, and Yu}]{cheng4961755equilibrium}
Cheng, Yao, Dazhong Wang, and Jingsheng Yu (2024), \enquote{Equilibrium
  analysis of the efficiency-adjusted deferred acceptance mechanism.}
  \emph{Available at SSRN 4961755}.
\bibAnnoteFile{cheng4961755equilibrium}

\bibitem[{Do{\u{g}}an and Ehlers(2021{\natexlab{a}})}]{dougan2021minimally}
Do{\u{g}}an, Battal and Lars Ehlers (2021{\natexlab{a}}), \enquote{Minimally
  unstable pareto improvements over deferred acceptance.} \emph{Theoretical
  Economics}, 16(4), 1249--1279.
\bibAnnoteFile{dougan2021minimally}

\bibitem[{Do{\u{g}}an and Ehlers(2021{\natexlab{b}})}]{ehlers2021robust}
Do{\u{g}}an, Battal and Lars Ehlers (2021{\natexlab{b}}), \enquote{Robust
  minimal instability of the top trading cycles mechanism.} \emph{American
  Economic Journal: Microeconomics}.
\bibAnnoteFile{ehlers2021robust}

\bibitem[{Dogan and Ehlers(2023)}]{dogan2023existence}
Dogan, Battal and Lars Ehlers (2023), \enquote{Existence of myopic-farsighted
  stable sets in matching markets.} In \emph{Proceedings of the 24th ACM
  Conference on Economics and Computation}, 536--536.
\bibAnnoteFile{dogan2023existence}

\bibitem[{Do{\u{g}}an and Yenmez(2020)}]{dougan2020consistent}
Do{\u{g}}an, Battal and M~Bumin Yenmez (2020), \enquote{Consistent pareto
  improvement over the student-optimal stable mechanism.} \emph{Economic Theory
  Bulletin}, 8, 125--137.
\bibAnnoteFile{dougan2020consistent}

\bibitem[{Dur et~al.(2019)Dur, Gitmez, and Y{\i}lmaz}]{dur2019school}
Dur, Umut, A~Arda Gitmez, and {\"O}zg{\"u}r Y{\i}lmaz (2019), \enquote{School
  choice under partial fairness.} \emph{Theoretical Economics}, 14(4),
  1309--1346.
\bibAnnoteFile{dur2019school}

\bibitem[{Ehlers and Morrill(2020)}]{ehlers2020legal}
Ehlers, Lars and Thayer Morrill (2020), \enquote{(il) legal assignments in
  school choice.} \emph{The Review of Economic Studies}, 87(4), 1837--1875.
\bibAnnoteFile{ehlers2020legal}

\bibitem[{Gale and Shapley(1962)}]{gale1962}
Gale, David and Lloyd~S Shapley (1962), \enquote{College admissions and the
  stability of marriage.} \emph{The American Mathematical Monthly}, 69(1),
  9--15.
\bibAnnoteFile{gale1962}

\bibitem[{Karp(2010)}]{karp2010reducibility}
Karp, Richard~M (2010), \emph{Reducibility among Combinatorial Problems}.
  Springer.
\bibAnnoteFile{karp2010reducibility}

\bibitem[{Kesten(2010)}]{kesten2010school}
Kesten, Onur (2010), \enquote{School choice with consent.} \emph{The Quarterly
  Journal of Economics}, 125(3), 1297--1348.
\bibAnnoteFile{kesten2010school}

\bibitem[{Kesten and Kurino(2019)}]{kesten2019strategy}
Kesten, Onur and Morimitsu Kurino (2019), \enquote{Strategy-proof improvements
  upon deferred acceptance: A maximal domain for possibility.} \emph{Games and
  Economic Behavior}, 117, 120--143.
\bibAnnoteFile{kesten2019strategy}

\bibitem[{Kwon and Shorrer(2020)}]{kwon2020justified}
Kwon, Hyukjun and Ran~I Shorrer (2020), \enquote{Justified-envy-minimal
  efficient mechanisms for priority-based matching.} \emph{Available at SSRN
  3495266}.
\bibAnnoteFile{kwon2020justified}

\bibitem[{Ortega and Klein(2023)}]{ortega2023cost}
Ortega, Josu{\'e} and Thilo Klein (2023), \enquote{The cost of
  strategy-proofness in school choice.} \emph{Games and Economic Behavior},
  141, 515--528.
\bibAnnoteFile{ortega2023cost}

\bibitem[{Ortega et~al.(2024)Ortega, Ziegler, Arribillaga, and
  Zhao}]{ortega2024unimprovable}
Ortega, Josu\'e, Gabriel Ziegler, R~Pablo Arribillaga, and Geng Zhao (2024),
  \enquote{Identifying and quantifying (un)improvable students.} \emph{arXiv
  preprint arXiv:2407.19831}.
\bibAnnoteFile{ortega2024unimprovable}

\bibitem[{Ortega et~al.(2025)Ortega, Ziegler, Arribillaga, and
  Zhao}]{ortega2025pareto}
Ortega, Josue, Gabriel Ziegler, R~Pablo Arribillaga, and Geng Zhao (2025),
  \enquote{What pareto-efficiency adjustments cannot fix.} \emph{arXiv preprint
  arXiv:2506.11660}.
\bibAnnoteFile{ortega2025pareto}

\bibitem[{Reny(2022)}]{reny2022efficient}
Reny, Philip~J (2022), \enquote{Efficient matching in the school choice
  problem.} \emph{American Economic Review}, 112(6), 2025--43.
\bibAnnoteFile{reny2022efficient}

\bibitem[{Roth(1982)}]{roth1982economics}
Roth, Alvin~E (1982), \enquote{The economics of matching: Stability and
  incentives.} \emph{Mathematics of operations research}, 7(4), 617--628.
\bibAnnoteFile{roth1982economics}

\bibitem[{Shapley and Scarf(1974)}]{shapley1974}
Shapley, Lloyd and Herbert Scarf (1974), \enquote{On cores and indivisibility.}
  \emph{Journal of Mathematical Economics}, 1(1), 23 -- 37.
\bibAnnoteFile{shapley1974}

\bibitem[{Shirakawa(2025)}]{shirakawa2024simple}
Shirakawa, Ryo (2025), \enquote{Simple manipulations in school choice
  mechanisms.} \emph{American Economic Journal: Microeconomics}.
\bibAnnoteFile{shirakawa2024simple}

\bibitem[{Tang and Yu(2014)}]{tang2014new}
Tang, Qianfeng and Jingsheng Yu (2014), \enquote{A new perspective on kesten's
  school choice with consent idea.} \emph{Journal of Economic Theory}, 154,
  543--561.
\bibAnnoteFile{tang2014new}

\bibitem[{Tang and Zhang(2021)}]{tang2021weak}
Tang, Qianfeng and Yongchao Zhang (2021), \enquote{Weak stability and pareto
  efficiency in school choice.} \emph{Economic Theory}, 71, 533--552.
\bibAnnoteFile{tang2021weak}

\bibitem[{Troyan et~al.(2020)Troyan, Delacr{\'e}taz, and
  Kloosterman}]{troyan2020essentially}
Troyan, Peter, David Delacr{\'e}taz, and Andrew Kloosterman (2020),
  \enquote{Essentially stable matchings.} \emph{Games and Economic Behavior},
  120, 370--390.
\bibAnnoteFile{troyan2020essentially}

\bibitem[{Troyan and Morrill(2020)}]{troyan2020obvious}
Troyan, Peter and Thayer Morrill (2020), \enquote{Obvious manipulations.}
  \emph{Journal of Economic Theory}, 185, 104970.
\bibAnnoteFile{troyan2020obvious}

\end{thebibliography}

\end{document}